\newtheorem{theorem}{Theorem}
\newtheorem{lemma}{Lemma}
\newtheorem{corollary}{Corollary}
\newtheorem{proposition}{Proposition}
\begin{document}
  

\title{Noether's Theorem and its Complement: A Gateway to Particle Interaction}
\author{\bf{Walter Smilga}\footnote{Isardamm 135 d, D-82538 Geretsried, Germany; wsmilga@compuserve.com}
\footnote{Extended version of a talk given at X International Symposium ``Quantum Theory and Symmetries'' 
(QTS-10) in Varna on June 20, 2017.}}

\begin{abstract}
Noether's theorem has gained outstanding importance in theoretical particle physics, because it 
leads to basic conservation laws, such as the conservation of momentum and of angular momentum.
Closely related to this theorem, but unnoticed so far, is a complementary law, which requires the 
(virtual) exchange of momentum between the particles of a closed multi-particle system.
This exchange of momentum determines an interaction. 
For a two-particle system defined by an irreducible representation of the Poincar\'e group, this 
interaction is identified as the electromagnetic interaction. 
This sheds new light on the particle interactions described by the Standard Model.
It resolves long-standing questions about the value of the electromagnetic coupling constant, 
and about divergent integrals in quantum electrodynamics.\\ \\
\bfseries{Keywords}\mdseries: Noether's theorem $\cdot$ Momentum entanglement $\cdot$ Interaction $\cdot$ 
Fine-structure constant $\cdot$ Quantum electrodynamics $\cdot$ Standard model  
\end{abstract}

\section{Introduction}

While preparing this paper, I found a description of the role of group theory in physics which perfectly
matches the intention of my paper:\\
`Anything that group theory does can be done without it. 
However, not using group theory is like not using a map---you never see the big picture and may go down many blind paths. 
Group theory can give you a lot of information with very little input.' 
({James B. Calvert, Associate Professor Emeritus of Engineering, University of Denver})

I will put this statement to the test on the example of quantum electrodynamics (QED).
Since QED was cast into its present form in 1949/50, we have been faced with two problems. 
The first problem concerns the mathematical inconsistencies of the perturbation algorithm, which become apparent in divergent integrals 
as soon as higher order approximations to the perturbation series are considered. 
Although we have learned to remove these divergences by a mathematical trick called renormalization, this trick has neither made the 
mathematics consistent nor has it contributed to a better understanding of QED. 
The second problem is that QED cannot determine the value of the electromagnetic coupling constant, which indicates that QED in its current 
form is not a closed theory of the electromagnetic interaction, but merely its phenomenological description.

In the absence of a consistent mathematical formulation of QED, interpretations  
of the perturbation algorithm have developed that have more and more been understood as properties of the real world. 
This led David Mermin \cite{dm} to remark `It is a bad habit of physicists to take their most successful abstractions to be real 
properties of our world.' 
Alfred North Whitehead \cite{anw} stated that `reification is part of normal usage of natural language, as well as of literature, where a 
reified abstraction is intended as a figure of speech, and actually understood as such. But the use of reification in logical reasoning or 
rhetoric is misleading and usually regarded as a fallacy.'
The discourse about QED relies heavily on reifications, just to mention `virtual particles', `off-shell behaviour', 
`vacuum polarization', and `vacuum fluctuation'. 
This makes it difficult to gain a clear and unbiased view of QED.

Feynman formulated QED \cite{rf1,rf2,rf3} as an S matrix theory in terms of propagation functions. 
The Feynman rules determine how these functions must be combined with vertex factors to create the S matrix for a given physical problem. 
The rules are visualized by the Feynman graphs, where internal lines correspond to propagation functions, and external lines to the incoming and 
outgoing fermions. 
Feynman, according to his own words, formulated QED as a quantum mechanical `description of a direct action at a distance (albeit delayed in time) 
between charges'.

There is another formulation of QED by Schwinger \cite{js} and Tomonaga \cite{st}, which is based on the concept of fermion fields interacting 
with the quantized electromagnetic field. 
The (local) interaction term of this quantum field theory is determined by the requirements of covariance under Lorentz transformations and of 
`minimal coupling'.
The requirement of minimal coupling was later replaced by the postulate of gauge invariance, which nowadays is upheld as a first principle 
of Nature. 

Dyson \cite{fdy} has shown that the field-theoretical formulation can be understood as an abstraction, from which Feynman's version of QED can 
be obtained by applying the field theory to concrete problems. 
In this sense, both formulations are physically equivalent. 

The fact that Feynman's QED is not formulated as a quantum field theory, but as (relativistic) quantum mechanics, is concealed by the metaphorical 
interpretation of the internal lines of Feynman diagrams as `virtual particles'. 
This interpretation, although a misleading reification, is a popular way of talking about Feynman diagrams without going into the mathematical 
details of the Feynman rules. 

Since particle physics is to a large extent determined by its symmetry with respect to the Poincar\'e group, I have based my paper on group theory, 
which is sufficiently general to avoid undue restrictions and sufficiently abstract to be less susceptible to misleading reifications.
I do not make any assumptions about gauge invariance or the existence of quantum fields, and presume only the validity of the quantum
mechanical axioms and of Poincar\'e invariance; these premises cannot be called into question, since they are experimentally well confirmed.  

The central question to be answered concerns the phenomenon of interaction itself.
The Feynman rules in momentum space \cite{sss} assert that for the construction of the S matrix, a factor ${(2\pi)^4 \, \delta^{(4)}(p-p'+k)}$ 
has to be inserted at each vertex, where $p$ and $ p'$ are the momenta of the fermion lines and $k$ that of a photon line, and that one is to 
integrate over the momenta of all internal lines. 
The integration leads to momentum entangled structures, describing a `virtual exchange of momentum' between the particles, while conserving the 
total momentum. 
An short analysis of S-matrix elements, calculated by Feynman rules, shows that it is this exchange of momentum that is de facto responsible 
for the interaction described by QED, irrespective of what physical or mathematical reason causes the entanglement.

The Standard Model suggests that the reason for entanglement is the exchange of virtual gauge particles, their existence being postulated 
by the principle of gauge invariance.
Considering, however, that the structure not only of single-particle but also of two-particle states is largely determined by their symmetry 
group, i.e. the Poincar\'e group, we can expect that group theory will provide a different, strictly mathematical answer to the question: 
What physical or mathematical conditions can force two single-particle states into a momentum entangled two-particle state?

In answering this question, the group theoretical approach will provide unexpected insights into the quantum mechanics of multi-particle systems 
(`the big picture') and, as by-products, give answers to the issues raised about divergences and coupling constants.

This paper is also intended as a continuation of a previous one \cite{sm1}, which presented a constructive foundation of the axioms of 
quantum mechanics.

\section{The Complement to Noether's Theorem}

Let me start with Noether's theorem, which, on a very basic level, links continuous symmetry groups with conservation laws.
In quantum mechanics, this linkage is especially close, because the generators of unitary symmetry transformations are, at the same time, 
self-adjoint operators that represent observables.
In the case of translation symmetry, the generators of the translations represent the (conserved) momentum; 
in the case of rotational symmetry, the generators of the rotations represent the (conserved) angular momentum.
In the Heisenberg picture, the proof of Noether's theorem is extremely simple: 
the invariance of the Hamiltonian with respect to unitary symmetry transformations means that it commutes with the generator $X$ of the 
symmetry operations. 
According to the Heisenberg equation 
\begin{equation}
\frac{dX}{dt} = i \left[H, X\right],       \label{1-1}
\end{equation}
the self-adjoint operator $X$, now understood as the representation of an observable, is therefore conserved in time. 

In closed multi-particle configurations that are invariant with respect to the operations of the Poincar\'e group, Noether's theorem 
ensures that the total momentum of the system is conserved in time.
If the particles interact, then, in general, this system will not be invariant with respect to translations of a single particle.
In this case, the Heisenberg equation (\ref{1-1}), applied to the generator of such a translation, states that the momentum of this 
particle is not conserved in time.
In general terms, this can be formulated as follows. 
\begin{theorem}[Complement to Noether's theorem]
If the Hamiltonian is not invariant with respect to a continuous unitary 
transformation, then the generator of the transformation is not conserved in time.
\end{theorem}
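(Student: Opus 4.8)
The plan is to read the statement as the logical contrapositive of Noether's theorem and to obtain it from the same ingredient that made the proof of the latter ``extremely simple'', namely the Heisenberg equation (\ref{1-1}). First I would fix notation: let $U(\alpha) = e^{i\alpha X}$ be the one-parameter unitary group generated by the self-adjoint operator $X$, so that ``the Hamiltonian is not invariant with respect to this transformation'' means $U(\alpha)\,H\,U(\alpha)^{-1} \neq H$ for some $\alpha$. Differentiating the family $\alpha \mapsto U(\alpha)\,H\,U(\alpha)^{-1}$ at $\alpha = 0$ identifies invariance of $H$ with the vanishing of the commutator $[H,X]$; hence non-invariance is equivalent to $[H,X] \neq 0$.

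Second, I would substitute this into the Heisenberg equation: since $dX/dt = i\,[H,X]$ and the right-hand side does not vanish by hypothesis, the self-adjoint operator $X$ --- which, exactly as in the statement of Noether's theorem above, is to be understood as the representation of an observable --- has a non-vanishing time derivative, and therefore is not conserved in time. That is the entire content of the theorem, and no further computation is needed.

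The one step that deserves care, and which I regard as the only genuine obstacle, is the passage from ``$H$ not invariant'' to ``$[H,X] \neq 0$'' when $X$ (and in general $H$ as well) are unbounded operators, so that the differentiability of $\alpha \mapsto U(\alpha)\,H\,U(\alpha)^{-1}$ and the relevant exchange of limits are not automatic. Working on a dense invariant domain of sufficiently regular vectors for the Poincar\'e representation carried by the system makes this manipulation rigorous, and such a domain is available in the present setting because the multi-particle configurations under discussion are, by assumption, acted upon by representations of the Poincar\'e group. With that caveat handled, the assertion follows immediately from (\ref{1-1}).
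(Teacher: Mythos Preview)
Your proposal is correct and follows exactly the paper's own approach: the paper's proof consists of the single sentence ``The proof follows analogously to the proof of Noether's theorem from the Heisenberg equation,'' which is precisely your argument via (\ref{1-1}) and $[H,X]\neq 0$. Your added remarks on unbounded operators and invariant domains go beyond what the paper supplies, but are consonant with its spirit and do not alter the logic.
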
 
\begin{proof} 
The proof follows analogously to the proof of Noether's theorem from the Heisenberg equation.
\end{proof}

Although the mathematical basis of this theorem is the same as that of Noether's original theorem, its physical implications
are quite different:
Because in the abovementioned configuration the total momentum is conserved in time, the change of momentum of one 
particle must be compensated for by a change of momentum of another particle; 
in other words, there is an exchange of momentum between these particles.
Hence, instead of a conservation law, the complement of Noether's theorem causes an interaction law: 
\begin{proposition}[Interaction law]
In a closed multi-particle configuration without translation invariance of the individual particles, the particles exchange momentum.
\end{proposition}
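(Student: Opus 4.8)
The plan is to chain together the two conservation statements already available: Noether's theorem for the closed system as a whole, and its complement (Theorem 1) for the individual particles. First I would fix the kinematics. The closed $N$-particle configuration acts on a Hilbert space carrying a representation of the Poincar\'e group, with total four-momentum $P=\sum_{i=1}^{N}p_i$, where $p_i$ generates the translations of the $i$-th particle alone. Because the configuration is closed and Poincar\'e invariant, $[H,P]=0$, so the Heisenberg equation gives $dP/dt=i[H,P]=0$, that is $\sum_i dp_i/dt=0$. The hypothesis of the proposition is that the individual particles are \emph{not} translation invariant, so there is at least one index $i$ with $[H,p_i]\neq 0$; by the Complement to Noether's theorem the associated single-particle momentum then satisfies $dp_i/dt=i[H,p_i]\neq 0$.

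The core step is the balancing argument. From $\sum_j dp_j/dt=0$ one obtains $dp_i/dt=-\sum_{j\neq i}dp_j/dt$, so a nonvanishing rate of change of $p_i$ can never stand alone: it is matched exactly, at every instant and as an operator identity on the state space, by an opposite net rate of change carried by the remaining particles. Whatever momentum particle $i$ loses is thereby simultaneously deposited among the others while the total is held fixed, which is precisely what the phrase ``exchange of momentum'' denotes. Specializing to $N=2$, the case relevant to the remainder of the paper, the identity forces $dp_1/dt=-dp_2/dt\neq 0$, an exchange in the most literal sense.

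The step I expect to be the main obstacle is not the algebra, which is immediate, but the interpretive claim that these matched changes constitute a genuine \emph{exchange} rather than a coincidence of independent fluctuations. To make this watertight I would note that, because the total momentum is conserved and held fixed by the dynamics, the operator relation $dp_i/dt=-\sum_{j\neq i}dp_j/dt$ binds the particles into a single non-factorizing dynamical unit: the time-evolved state cannot persist as a product of single-particle states, which is the momentum entanglement anticipated in the introduction. I would therefore close by stressing that the interaction law is, at bottom, the impossibility of independent single-particle dynamics once individual translation invariance fails, leaving the explicit identification of the resulting interaction to the two-particle analysis that follows.
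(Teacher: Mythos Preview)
Your proposal is correct and follows precisely the route sketched in the paper: apply Noether's theorem to translations of the total system to get $dP/dt=0$, apply its complement to translations of the individual particles to get $dp_i/dt\neq 0$, and read off the compensation $dp_i/dt=-\sum_{j\neq i}dp_j/dt$ as exchange of momentum. The paper's own proof is a one-line pointer to exactly this combination, so you have simply made the argument explicit; the additional interpretive paragraph on entanglement anticipates material the paper treats separately and is not needed here.
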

\begin{proof}
The proof follows from applying Noether's theorem to translations of the total system and its complement to translations of 
the individual particles.
\end{proof}

This interaction law can obviously claim the same universal validity as a conservation law derived from Noether's original theorem. 
The question is: Can we find realistic multi-particle systems where the conditions for the application of this law
are met, namely, the existence of continuous unitary transformations and a Hamiltonian that is not invariant under these 
transformations? 
The following two sections will illustrate how these conditions are actually met for a simple two-particle system.

\section{Two-Particle States and Interaction}

According to the axioms of quantum mechanics in combination with Poincar\'e invariance, two independent particles 
with momenta $p_1$ and $p_2$ are described by a product representation of the Poincar\'e group. 
A product representation can be reduced to the direct sum of irreducible representations. 

The irreducible representations of the Poincar\'e group are characterized by fixed
eigenvalues of two Casimir operators \cite{sss1} 
\begin{eqnarray}
P = p^\mu p_\mu\;\; \mbox{and} \;\;
W = -w^\mu w_\mu\; , \; \mbox{with} \;\; 
w_\sigma = \frac{1}{2} \epsilon_{\sigma \mu \nu \lambda} 
M^{\mu \nu}p^\lambda \;.                                             \label{2-1}
\end{eqnarray}
Here, $p^\mu$ and $M^{\mu \nu}$ are the operators of 4-momentum and angular momentum.

The state space $H_I$ of an irreducible representation is a subspace of the state space $H_P$ of the 
corresponding product representation.
In $H_I$ there exists a basis of eigenstates $\left|\mathbf{p},m\right>$ of the total 3-momentum 
$\mathbf{p} = \mathbf{p}_1 + \mathbf{p}_2$ and of a component $m$ of $M^{\mu\nu}$ \cite{sss1}. 
The translations of a single particle, generated by the operators of the individual particle momenta, are 
well-defined unitary transformations within $H_P$, but, in contrast to $H_P$, they are (in general) not symmetry 
transformations of $H_I$. 
In other words, they lead out of $H_I$.
This follows from the commutation relations of the Poincar\'e group \cite{sss1}
\begin{equation}
\left[p^\sigma, M^{\mu\nu}\right] = 
i\,(g^{\mu\sigma} p^\nu - g^{\nu\sigma} p^\mu),        \label{2-3} 
\end{equation} 
which are equal to $0$ only if $\sigma \not= \mu,\nu$. 
If the total momentum $\mathbf{p}$ points in the direction $\sigma$, then $M^{\mu\nu}$ 
does not commute with $\mathbf{p}_1$ and $\mathbf{p}_2$, unless $\mathbf{p}_1$ and 
$\mathbf{p}_2$ are parallel or anti-parallel to $\mathbf{p}$. 
This means the basis states are, in general, not eigenstates of the individual particle momenta:
in consequence, they are not invariant with respect to translations of a single particle. 
Hence, with the exception of the parallel/anti-parallel cases, the following lemma applies.

\begin{lemma} 
Eigenstates of total momentum and orbital angular momentum are not invariant with respect 
to translations of the individual one-particle states.
\end{lemma}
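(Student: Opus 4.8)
The plan is to make the qualitative argument in the paragraph preceding the lemma fully precise by exhibiting an explicit generator whose action moves a basis state $\left|\mathbf{p},m\right>$ out of $H_I$, which by definition means the state is not invariant under the corresponding one-particle translation. First I would fix a frame: since the claim excludes the parallel/anti-parallel case, I may assume without loss of generality that the total momentum $\mathbf{p}$ points along a fixed axis, say the $3$-axis, while $\mathbf{p}_1$ (hence $\mathbf{p}_2 = \mathbf{p} - \mathbf{p}_1$) has a nonzero transverse component. The generator of a translation of particle $1$ in the transverse direction $\sigma \in \{1,2\}$ is the component $p_1^\sigma$ of the individual momentum operator; the object that labels the basis states is a component $m$ of $M^{\mu\nu}$, for concreteness the orbital angular momentum component $M^{12}$ along the $3$-axis.

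Next I would invoke the commutation relation (\ref{2-3}) to compute $\left[p^\sigma, M^{12}\right]$ for $\sigma\in\{1,2\}$: taking $\mu\nu = 12$ and $\sigma = 1$ gives $\left[p^1, M^{12}\right] = i(g^{21}p^2 - g^{22}p^1) = i\,p^1$ (using $g^{22} = -1$ in the mostly-minus convention implicit in (\ref{2-1})), and similarly $\left[p^2, M^{12}\right] = -i\,p^2$; both are nonzero precisely because $\sigma$ is not distinct from the index pair $\{1,2\}$. Since the total momentum operator splits as $p^\sigma = p_1^\sigma + p_2^\sigma$ and the two one-particle momentum operators act in commuting tensor factors, the individual operator $p_1^\sigma$ cannot commute with $M^{12}$ either, unless the action of $p_1^\sigma$ on the relevant states happens to coincide (up to the $p_2^\sigma$ piece) with that of the total $p^\sigma$ — which is exactly the parallel/anti-parallel exception, where $\mathbf{p}_1$ and $\mathbf{p}_2$ have no transverse part so that $p_1^\sigma$ and $p_2^\sigma$ annihilate the relevant components. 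Outside that exception, $\left[p_1^\sigma, M^{12}\right]\neq 0$ as an operator on $H_I$.

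I would then conclude as follows. The basis states $\left|\mathbf{p},m\right>$ that span $H_I$ are, by construction, eigenstates of $M^{12}$ with eigenvalue $m$. If the translation generated by $p_1^\sigma$ were a symmetry of $H_I$, the unitary $e^{-i a\, p_1^\sigma}$ would map $H_I$ into itself, hence $p_1^\sigma$ would be a well-defined self-adjoint operator on $H_I$ commuting-compatibly with the eigenspace decomposition; but $\left[p_1^\sigma, M^{12}\right]\neq 0$ means $p_1^\sigma \left|\mathbf{p},m\right>$ is not in the $m$-eigenspace, and more to the point the reduction of $H_P$ into irreducibles described after (\ref{2-1}) shows that shifting a single particle's momentum while holding the total fixed changes the relative momentum and thus leaves $H_I$. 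Therefore $\left|\mathbf{p},m\right>$ is not invariant under one-particle translations, which is the assertion.

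The main obstacle I anticipate is the step from ``$p_1^\sigma$ does not commute with $M^{12}$'' to ``$p_1^\sigma$ leads out of $H_I$'': commutation relations alone guarantee non-invariance of individual basis \emph{states} but, a priori, the subspace $H_I$ could still be stable under the one-particle translation even if no individual basis vector is an eigenvector of $p_1^\sigma$. Closing this gap cleanly requires using the identification of $H_I$ as a fixed-Casimir sheet — in particular that the invariant $W$ of (\ref{2-1}), built from $M^{\mu\nu}$ and $p^\lambda$, would have to be preserved, yet a one-particle transverse translation changes $\mathbf{p}_1$ without changing $\mathbf{p}$, hence changes the internal (relative) momentum and generically the value of $W$, forcing the image out of $H_I$. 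I would present that Casimir argument explicitly rather than leaning only on (\ref{2-3}), and flag the parallel/anti-parallel configurations as the precise locus where the transverse components vanish and the obstruction disappears.
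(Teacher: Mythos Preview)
Your approach is essentially the paper's own: the paper derives the lemma directly from the commutation relation \eqref{2-3}, arguing that when the component $M^{\mu\nu}$ labelling the basis does not commute with the transverse components of the individual momenta, the basis states cannot be simultaneous eigenstates of $\mathbf{p}_1$ (or $\mathbf{p}_2$) and hence are not invariant under one-particle translations---with the parallel/anti-parallel configurations singled out as the exception. Your write-up simply makes that argument explicit by fixing a frame and naming the component $M^{12}$.

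Two small remarks. First, your explicit commutator evaluations slip an index: from \eqref{2-3} with $\mu=1,\ \nu=2,\ \sigma=1$ one gets $[p^{1},M^{12}]=i(g^{11}p^{2}-g^{21}p^{1})=-i\,p^{2}$, and with $\sigma=2$ one gets $[p^{2},M^{12}]=i\,p^{1}$; the conclusion (nonvanishing commutator for transverse $\sigma$) is unaffected, but you should correct the displayed results. Second, the Casimir-$W$ argument you propose at the end addresses a genuinely stronger statement---stability of the whole subspace $H_I$ under one-particle translations---than the lemma actually asserts, which is only non-invariance of the individual basis vectors $\left|\mathbf{p},m\right>$; for the lemma as stated the commutation argument already suffices, so you may present the $W$ discussion as a useful strengthening rather than as filling a gap in the proof of the lemma itself.
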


One can say that based on the commutation relations of the Poincar\'e group, the 
conservation of total and angular momentum breaks the translation invariance 
of the individual particles.
 
Lemma 1 largely determines the structure of the basis states $\left|\mathbf{p},m\right>$: 
They are a momentum entangled superposition of product states 
$\left|\mathbf{p_1},\mathbf{p_2}\right>$ with the same total momentum $\mathbf{p}$ 
\begin{equation}
\left|\mathbf{p},m\right> =  
\int_\Omega\!\!d^3\mathbf{p}_1 d^3\mathbf{p}_2\; c(\mathbf{p}, m,\mathbf{p}_1,\mathbf{p}_2) \, 
\left|\,\mathbf{p}_1,\mathbf{p}_2\right>.                          \label{2-4}
\end{equation}    
The coefficients $c(\mathbf{p}, m,\mathbf{p}_1,\mathbf{p}_2)$ are the analogues of the
Clebsch--Gordan coefficients, as known from the coupling of angular momenta.
The domain of integration $\Omega$ is a finite subspace of the two-particle mass shell.
The product states are (in general) momentum entangled, because otherwise the basis states would 
be eigenstates also of the individual particle momenta, which (in general) is excluded by Lemma 1.
The product states are normalized according to
\begin{equation}
\left<\mathbf{p}_1,\mathbf{p}_2 | \mathbf{p}'_1,\mathbf{p}'_2\right> = 
\delta(\mathbf{p}_1 - \mathbf{p}'_1) \, \delta(\mathbf{p}_2 - \mathbf{p}'_2).                      \label{2-6}
\end{equation}
Therefore, the entangled two-particle states (\ref{2-4}) need to be normalized by the factor 
$\omega = V(\Omega)^{-\frac{1}{2}}$, where $V(\Omega)$ is the volume of the domain of integration 
$\Omega$. 
Together, $d^3\mathbf{p}_1 d^3\mathbf{p}_2$ and $\omega$ form an infinitesimal volume element  
that ensures the correct normalization.
For expository reasons, I will not include $\omega$ in the two-particle states, but write 
$\omega \left|\mathbf{p},m\right>$ for the normalized states.

Given this basis, the Hamiltonian of the two-particle system can be written in the form 
\begin{equation}
H = \omega^2 \; \sum_m \int\!d^3\mathbf{p} \; \left|\mathbf{p},m\right> h_{p m} \left<\mathbf{p},m\right|. \nonumber
\end{equation}
Since the basis states are eigenstates of this Hamiltonian, they do not change in time -- except for a phase factor. 
They describe a stable configuration with conserved total and angular momenta: in other words, they describe a closed system.
On the other side, the basis states and, therefore, the Hamiltonian, are not invariant under translations of the 
individual particles.  
Since these translations are well-defined unitary transformations of the underlying product state space, 
Proposition 1 can be applied, leading, together with Lemma 1, to the following statement. 
\begin{corollary} In two-particle systems described by an irreducible representation of the Poincar\'e group, the particles 
exchange (virtual) quanta of momentum.
\end{corollary}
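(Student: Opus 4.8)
The plan is to obtain Corollary 1 as a direct instance of Proposition 1, with Lemma 1 supplying the structural input; so the actual work is only to verify that a two-particle system carried by an irreducible representation of the Poincar\'e group meets the two hypotheses of Proposition 1 --- the existence of continuous unitary transformations of the system, and the failure of its Hamiltonian to be invariant under them. For the first hypothesis I would take the transformations to be the one-particle translations $U_1(\mathbf{a})=\exp(i\,\mathbf{a}\cdot\mathbf{p}_1)$ and $U_2(\mathbf{a})=\exp(i\,\mathbf{a}\cdot\mathbf{p}_2)$, generated by the individual particle momenta. By construction of the product representation these are well-defined unitary operators on $H_P$, and since $H_I\subset H_P$ they act on the carrier space of the system; this step is essentially bookkeeping.

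The second hypothesis is where Lemma 1 enters. Using the explicit form $H=\omega^2\sum_m\int d^3\mathbf{p}\,\left|\mathbf{p},m\right> h_{pm}\left<\mathbf{p},m\right|$, the eigenstates of $H$ are exactly the states $\left|\mathbf{p},m\right>$ of total momentum and orbital angular momentum. By Lemma 1 --- which follows from the commutation relations (\ref{2-3}), since for $\mathbf{p}$ along a direction $\sigma$ the operator $M^{\mu\nu}$ fails to commute with $\mathbf{p}_1$ and $\mathbf{p}_2$ outside the parallel/anti-parallel configurations --- these eigenstates are not eigenstates of the individual momenta, so $U_i(\mathbf{a})\left|\mathbf{p},m\right>$ lies outside $H_I$ for generic $\mathbf{a}$. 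Hence $H$ does not commute with $\mathbf{p}_1$ or $\mathbf{p}_2$, i.e. it is not invariant under $U_1$ or $U_2$, and Proposition 1 applies: the two particles exchange momentum.

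It remains to read off that this exchange is the \emph{virtual} exchange of momentum \emph{quanta} asserted in the statement. Noether's theorem applied to the total translations gives $d\mathbf{p}/dt=i[H,\mathbf{p}]=0$ with $\mathbf{p}=\mathbf{p}_1+\mathbf{p}_2$, since $H$ is diagonal in $\mathbf{p}$; the complement gives $d\mathbf{p}_1/dt=i[H,\mathbf{p}_1]\neq0$ and likewise for $\mathbf{p}_2$. Adding the two forces $d\mathbf{p}_1/dt=-d\mathbf{p}_2/dt$: whatever momentum one particle loses the other gains, the total being conserved, which is the momentum exchange. Because the basis states are stationary (they evolve only by a phase) this exchange never shows up as a net transfer --- it is ``virtual'' --- and it is already encoded in the entangled superposition (\ref{2-4}), whose shifts $\mathbf{p}_1-\mathbf{p}_1'$ ranging over $\Omega$ play the role of the exchanged momenta; the word ``quanta'' anticipates the later identification of this exchange with photon exchange but is not needed for the corollary itself.

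The main obstacle I expect is not the algebra but making the phrase ``$H$ is not invariant'' rigorous, given that $H$ is defined on $H_I$ while $U_1(\mathbf{a})$ acts on the larger space $H_P$: one must either extend $H$ to $H_P$ and argue that no extension preserving the $H_I$-dynamics commutes with $\mathbf{p}_1$, or work directly with $[H,\mathbf{p}_1]$ restricted to $H_I$ and show it does not vanish --- which is precisely what Lemma 1 delivers. A secondary, interpretive subtlety is that for a single stationary state $\left|\mathbf{p},m\right>$ the expectation $\left<\mathbf{p}_1\right>$ is time-independent, so ``exchange'' cannot mean a classical flow; it has to be understood operator-theoretically, equivalently as the statement that $\left|\mathbf{p},m\right>$ cannot be written as a product $\left|\mathbf{p}_1\right>\left|\mathbf{p}_2\right>$, and I would phrase the conclusion accordingly.
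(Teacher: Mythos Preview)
Your proposal is correct and follows essentially the same route as the paper: verify that the one-particle translations are well-defined unitary operators on $H_P$, use Lemma~1 to conclude that the Hamiltonian built from the $\left|\mathbf{p},m\right>$ basis is not invariant under them, and then invoke Proposition~1. Your additional unpacking of the Heisenberg equations and your flagging of the $H_I$--versus--$H_P$ domain issue go somewhat beyond the paper's own (very brief) justification, but the logical skeleton is identical.
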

Note that the `exchange of momentum' does not necessarily imply a dynamic process: in the first place, it refers to the 
static entangled structure of two-particle states, which, however, bears the potential for a dynamic process. 
Relating to such structures, Feynman used the phrase `exchange of virtual quanta' \cite{rf2}.

The (virtual or real) exchange of momentum between two particles defines an interaction; the similarity to the electromagnetic 
interaction as described in the Introduction is obvious.
However, this interaction mechanism is not based on a purposely constructed model, as in the case of the Standard Model, but is 
rooted in the principles of quantum mechanics.
The particles interact directly and necessarily, without any mediating particles or fields, thanks to the entangled structure 
of the two-particle state.

Alternatively, Corollary 1 can be obtained directly from the fact that an eigenstate of angular momentum must have a rotational
symmetry, which means that it must be a superposition of product states $\left|\mathbf{p}_1,\mathbf{p}_2 \right>$ such that 
along with any pure product state, the rotated versions of this state also contribute to the eigenstate. 
This necessarily gives the eigenstate a momentum entangled structure.

\section{Illustration: A Scattering Experiment}

The following thought experiment, describing a scattering process, will illustrate the interaction mechanism.

Figure 1 shows an incoming plane wave of particles, some apertures, a target, and a detector.\\ 

\begin{tikzpicture}[scale=0.9, left=0.1cm]
\draw[white] (0.0,0.0)  -- (0.2,0.0);
\draw[rotate=45,black,very thick] (3.017,0.95) -- (3.017,0.05);
\draw[rotate=45,black,very thick] (3.017,-0.15) -- (3.017,-1.05);
\draw[rotate=45,black,very thick] (3,0.95) -- (3,0.05);
\draw[rotate=45,black,very thick] (3,-0.15) -- (3,-1.05);

\draw[black, thin] (3.9,4.1) -- (6.5,4.1);
\draw[black, thin] (3.9,4.1) -- (2.08,2.28);
\draw[rotate=45,black,very thin] (3.3,0.35) -- (3.4,0.15) -- (3.5,0.35);
\draw[rotate=45,black,very thin] (3.3,-0.23) -- (3.4,-0.03) -- (3.5,-0.23);
\draw[rotate=45,black,very thin] (2.0,-0.03) -- (4.0,-0.03);
\draw [black] (2.0,2.1)   node [right,text width=1cm,text centered]{$\mathbf{d}$};

\draw [black] (0.5,0.7)   node [right,text width=1.4cm]{$\mathbf{p}_1$};
\draw [black] (11.08,3.86)   node [right,text width=1.4cm]{$\mathbf{p}_1-\mathbf{k}$};
\draw [black] (3.9,4.50)  node [right,text width=1.4cm]{$\mathbf{p}_2+\mathbf{k}$};

\draw[rotate=0,black,very thick] (6.6,4.90) -- (6.6,4.00);
\draw[rotate=0,black,very thick] (6.6,3.80) -- (6.6,2.90);
\draw[rotate=0,black,very thick] (7.6,4.90) -- (7.6,4.00);

\draw[rotate=0,black,very thick] (7.6,3.80) -- (7.6,2.90);

\draw [black] (2.4,4.50)   node [right,text width=1.5cm]{Target:};
\draw [black] (9.2,3.90)   node [right,text width=2.0cm]{Detector:};

\foreach \x in {10,...,20}
    \draw[rotate=45,black] (\x/7,0.95) -- (\x/7,-1.05);
\filldraw [red] (3.9,4.1) circle (2pt);

\draw [blue, thick, x=0.02cm, y=0.85cm,
declare function={
hyperbel(\t,\a,\b)=sqrt(\t*\t*\a + \b);
}] 
plot [domain=-115:120, samples=50, smooth, rotate=201.5, xshift=-5.15cm, yshift=-2.6cm] (\x - 4.0,{hyperbel(\x/5,0.0031,1.0)}); 
		
\foreach \x in {33,...,45}
    \draw[blue] (\x/5,3.80) -- (\x/5,4.00);

\draw [blue, thick, x=0.02cm, y=0.85cm, declare function={hyperbel(\t,\a,\b)=sqrt(\t*\t*\a + \b);}] 
plot [domain=-115:120, samples=50, smooth, rotate=201.5, xshift=-5.15cm, yshift=-2.6cm] (\x - 4.0,{hyperbel(\x/5,0.0031,1.0)}); 
\end{tikzpicture}
\\ \hspace{2cm}{\bf Figure 1}: Geometry of the scattering experiment\\

Between the first and second aperture, an incoming particle and a particle of the target 
form a two-particle state. 
The apertures of the collimator in front of the detector select an outgoing plane wave. 
The total momentum $\mathbf{p}$ is equal to the incoming momentum $\mathbf{p}_1$. 
In the semi-classical view, $\mathbf{p}_1$, together with the perpendicular distance $\mathbf{d}$ 
between the beam and the target, define an angular momentum $\mathbf{m} = \mathbf{d} \times \mathbf{p}_1$.  
Therefore the experimental setup can be considered a filter that selects intermediate 
eigenstates of angular momentum such as $\omega\left|\mathbf{p},m\right>$. 

Note the similarities with the diffraction of a plane wave at a pin hole: 
both here and there, the basic scattering mechanism becomes visible when essential parts of the incoming plane 
wave are blocked by an aperture. 
Here, the momentum entangled two-particle state is left, there, the spherical elementary wave. 

The scattering amplitude from the incoming product state $\left|\mathbf{p}_1,\mathbf{p}_2 \right>$ 
to the outgoing product state $\left|\mathbf{p}_1-\mathbf{k},\mathbf{p}_2+\mathbf{k}\right>$ 
is given by 
\begin{equation}
S(\mathbf{k}) = \omega^2 \left<\mathbf{p}_1, \mathbf{p}_2|\mathbf{p},m\right>\!
\left<\mathbf{p},m|\mathbf{p}_1-\mathbf{k},\mathbf{p}_2+\mathbf{k}\right>.    \label{3-1}
\end{equation}
Since the intermediate state $\left|\mathbf{p},m\right>$ is momentum entangled, it connects incoming 
and outgoing states also for non-zero values of $\mathbf{k}$. 
There is, in fact, an interaction by an exchange of momentum. 

In the thought experiment, only the value of the Casimir operator $P$ is determined by the 
momentum of the incoming plane wave.
The second Casimir operator $W$ is determined by the geometry of the setup and the sizes of the apertures.

In Equation (\ref{3-1}), the square of the normalization factor $\omega$ of the 
intermediate two-particle state acts like a coupling constant between the incoming and 
the outgoing states. 
This can be formulated as follows.

\begin{corollary} In an irreducible two-particle representation of the Poincar\'e group, 
the square of the normalization factor of a two-particle state defines a coupling 
constant that determines the strength of the interaction between the particles.
\end{corollary}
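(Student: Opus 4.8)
The plan is to read the claim directly off the scattering amplitude (\ref{3-1}) together with the normalization condition that defines $\omega$. First I would make the matrix elements in (\ref{3-1}) explicit: by (\ref{2-4}) and (\ref{2-6}) one has $\langle \mathbf{p}_1,\mathbf{p}_2\,|\,\mathbf{p},m\rangle = c(\mathbf{p},m,\mathbf{p}_1,\mathbf{p}_2)$, and the requirement that the physical states $\omega\,|\mathbf{p},m\rangle$ carry the correct normalization forces the Clebsch--Gordan-type coefficients to be normalized relative to the domain $\Omega$, i.e. $\omega^{2}\int_{\Omega} d^{3}\mathbf{p}_1\, d^{3}\mathbf{p}_2\; |c(\mathbf{p},m,\mathbf{p}_1,\mathbf{p}_2)|^{2}=1$ with $\omega^{2}=V(\Omega)^{-1}$. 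Hence (\ref{3-1}) factorizes as $S(\mathbf{k})=\omega^{2}\,F(\mathbf{k})$, where $F(\mathbf{k})=c(\mathbf{p},m,\mathbf{p}_1,\mathbf{p}_2)\,c^{*}(\mathbf{p},m,\mathbf{p}_1-\mathbf{k},\mathbf{p}_2+\mathbf{k})$ is a purely geometric form factor that carries all of the $\mathbf{k}$-dependence but none of the overall scale.

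Next I would verify that $\omega^{2}$ has the two properties that characterize a coupling constant. (i) It is independent of the momentum transfer: $\omega=V(\Omega)^{-1/2}$ depends only on the phase-space volume admitted by the experimental filter of Section 4, not on the kinematics of the transition, so it multiplies every matrix element of the interaction uniformly --- just as a coupling constant multiplies every vertex --- while the kinematic dependence is relegated to $F(\mathbf{k})$. (ii) It fixes the magnitude of the transition: because the $c$'s are normalized on $\Omega$, the form factor $F(\mathbf{k})$ is bounded and of order unity, so $|S(\mathbf{k})|$ is proportional to $\omega^{2}$, and a larger $\omega^{2}$ means a larger amplitude for precisely the exchange of the momentum quantum $\mathbf{k}$ that Corollary 1 has already shown to be obligatory. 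Thus $\omega^{2}$ quantifies the strength of that interaction, which is what ``coupling constant'' is meant to express.

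Finally I would close the logical loop with Proposition 1 and Corollary 1: the interaction itself --- the forced exchange of momentum of Lemma 1 --- exists independently of $\omega$, so the only quantity left to be determined is its strength, and (\ref{3-1}) pins that strength to $\omega^{2}=V(\Omega)^{-1}$; this establishes the corollary. The main obstacle I anticipate is conceptual rather than computational, namely steps (i) and (ii): one must be sure that the normalization convention genuinely separates a kinematic/geometric factor from a strength factor, i.e. that no part of the interaction strength is hiding inside the coefficients $c$ themselves. Making this airtight requires controlling how $c$ depends on $\Omega$ and showing that, once $\Omega$ is fixed by the physical setup, the $c$'s remain bounded and $\mathcal{O}(1)$, so that all the remaining freedom --- and hence all the physics of how strongly the two particles interact --- resides in $\omega^{2}$.
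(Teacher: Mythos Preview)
Your proposal is correct and follows the same route as the paper: the paper's entire argument is the one-sentence observation preceding the corollary, namely that in equation~(\ref{3-1}) the factor $\omega^{2}$ multiplies the transition amplitude and thus ``acts like a coupling constant''. You have simply unpacked that observation---making the matrix elements explicit via (\ref{2-4}) and (\ref{2-6}), isolating the factorization $S(\mathbf{k})=\omega^{2}F(\mathbf{k})$, and spelling out why the prefactor deserves the name---which is more than the paper itself does, but the approach is the same.
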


A comparison of the numerical value of $\omega^2$ with empirical coupling constants 
will help to identify the interaction.

\section{Calculation of the Normalization Factor}

The domain of integration $\Omega$ in the two-particle state (\ref{2-4}) is a finite 
subspace of the two-particle mass shell, parametrized by the momentum vectors 
$\mathbf{p}_1, \mathbf{p}_2$. 
As shown in the following, this mass shell has the topological structure of a fibre 
space. 
It looks very similar to the Hopf fibration \cite{hf} of the hypersphere in four 
dimensions $S^3$.
(On YouTube there is a very instructive visualization of the Hopf fibration 
\cite{nj}.) 

The construction of the infinitesimal volume element on $\Omega$ was described in \cite{sm3}; 
here I provide an improved and simplified version of that construction.

Let $p_1$ and $p_2$ be the 4-momenta of two particles with masses $m_1$ and $m_2$.
They satisfy the mass shell relations
\begin{equation}
{p_1}^2 = m_1^2 \;\; \mbox{ and } \;\; {p_2}^2 = m_2^2. \label{4-1}
\end{equation}               
The total momentum $p$ and the relative momentum $q$ are defined by
\begin{equation}
p = p_1 + p_2 \;\; \mbox{ and } \;\; q = p_1 - p_2   \label{4-2}
\end{equation}
and satisfy
\begin{equation}
p\,q = m_1^2 - m_2^2. \label{4-3}
\end{equation}
In the rest frame of $p$, Equation (\ref{4-3}) allows rotating the vector $q$ relatively to $p$ by 
the action of SO(3), as long as no other restrictions apply.

For an irreducible two-particle representation, the relations
\begin{equation}
p^2 = m_{tot}^2                \label{4-4}   
\end{equation}
and
\begin{equation}
q^2 = 2 m_1^2 + 2 m_2^2 - m_{tot}^2      \label{4-5}
\end{equation}
hold, where the constant $m_{tot}$ is the effective mass of the two-particle system, 
corresponding to the Casimir operator $P \equiv p^2$.
Equations (\ref{4-4}) and (\ref{4-5}) can be combined to form the equation of the 
two-particle mass shell
\begin{equation}
p^2 + q^2 =  2 m_1^2 + 2 m_2^2.         \label{4-6}   
\end{equation}

Rotations around the total momentum $\mathbf{p}$ as the axis leave $p$ invariant but change $q$. 
Therefore, these rotations define an internal degree of freedom with an SO(2) symmetry.
The action of SO(3,1) on $p$, together with the action of SO(2) on $q$,
generates the two-particle mass shell (\ref{4-6}), which parametrizes the state 
space of an irreducible two-particle representation. 
The SO(2) moves within SO(2,1) as $p$ moves through the hyperboloid (\ref{4-4}).
The two-particle mass shell has therefore the structure of a circle bundle over 
an hyperboloid. 
The circle fibres parametrize the internal rotational degree of freedom of a two-particle state 
of an irreducible representation.
The entanglement of the two-particle state (\ref{2-4}) is generated by integration over this 
one-dimensional parameter.


The groups SO(3,1), acting on $p$, and SO(2,1), acting on $q$, are subgroups of 
the group SO(5,2).
(The form of Equation (\ref{4-6}) demonstrates these symmetries.)
This fact can be used to simplify the calculation of the volume element on $\Omega$ 
by starting from a fully SO(5,2) symmetric space, isomorphic to the symmetric space 
SO(5,2)/(SO(5)$\times$SO(2)). 
This is the fourth symmetric domain of Cartan, also known as the Lie ball, 
in 5 dimensions \cite{eca,hua}.

An integral on the Lie ball can be split into a spherical integral over
the boundary $Q^5$ of the unit Lie ball $D^5$ 
\begin{equation}
\int_{Q^5} d^4x \, ...\, ,  \label{4-10}
\end{equation}
and a second integral over the 
radial direction of $D^5$
\begin{equation}
\int dr \, ...\, .                               \label{4-9}
\end{equation}
The spherical integral (\ref{4-10}) is normalized by the inverse of the volume $V\!(Q^5)$. 
This factor is the first contribution to the squared normalization factor $\omega^2$.

For a given radius, $D^5$ can be understood as generated by the action of SO(5), 
whereas the domain $\Omega$ can be understood as generated by the combined action of SO(3) 
and SO(2), equivalent to the action of four rotations with orthogonal axes.
This corresponds to a local SO(4). 
Hence, in comparison to $D^5$, on $\Omega$, a given volume element contains (parametrizes) fewer 
product states contributing to a two-particle state.
The ratio is given by the volume of the symmetric space SO(5)/SO(4), which is the unit sphere 
$S^4$ in 5 dimensions.
Therefore, the volume element on $\Omega$, i.e. $\omega^2$, must be divided by the volume 
$V\!(S^4)$.

The infinitesimal volume element is still a spherical volume element with equal 
(infinitesimal) sizes in the four directions on $Q^5$, but with a different size 
in the radial direction.
To give the infinitesimal volume element the form of an isotropic Cartesian volume 
element, as required in the integral of the two-particle state (\ref{2-4}), the 
size in the radial direction must be adjusted.
On the unit Lie ball, the integration in the radial direction is given by the integral
(\ref{4-9}) with boundaries $0$ and $1$.
The integration of the infinitesimal volume element $dr$, considered as part of 
a Cartesian volume element, contributes a factor $1$ to the integral.
(The normalization factor for the radial direction is therefore equal to 1.)
Together with the integration over the surface of $Q^5$, 
the infinitesimal volume elements add up to the volume of the unit Lie ball $V\!(D^5)$.
Therefore, each of the four elements of $d^4x$, again considered as part of a 
Cartesian volume element, contributes a factor of $V\!(D^5)^{\frac{1}{4}}$ to the 
$V\!(D^5)$.
To make the infinitesimal volume element an isotropic one, $dr$ must be rescaled
by this factor.

The Jacobian that relates $p$ and $q$ to $p_1$ and $p_2$ contributes an additional 
factor of 2.

Taking all factors together results in the squared normalization factor
\begin{equation}
\omega^2 = 2\,V\!(D^5)^{\frac{1}{4}} \, / \, (V\!(Q^5)\,V\!(S^4)).    \label{4-12}
\end{equation}

Finally, irreducible representations with different values of the Casimir operator $W$ 
may contribute additively (and coherently) to the transition amplitude (\ref{3-1}). 
Their contribution derives from comparing the range of $q$ within a product representation, 
as covered by the action of SO(3) (cf. Equation (\ref{4-3})), with the range within an 
irreducible representation, as covered by the action of SO(2).
These contributions increase the transition amplitude by the factor $V$(SO(3)/SO(2)) = $V(S^2)$ = $4\pi$.
Accordingly, the coupling constant in the transition amplitude (\ref{3-1}) must be adjusted
by the factor $4\pi$:
\begin{equation}
4\pi \omega^2 = 8\pi\,V\!(D^5)^{\frac{1}{4}} \, / \, (V\!(Q^5)\,V\!(S^4)).    \label{4-13}
\end{equation}

This value has to be compared with empirical coupling constants. 
Inserting the explicit values (taken from \cite{hua})
\begin{equation}
V\!(D^5) = \frac{\pi^5}{2^4\, 5!},\;\;    	 
V\!(Q^5) = \frac{8 \pi^3}{3},\;\;         	 
V\!(S^4) = \frac{8 \pi^2}{3}                                               	  \label{4-15}
\end{equation}
leads to 
\begin{equation}
4\pi \omega^2 =  \frac{9}{16 \pi^3} \left(\frac{\pi}{120}\right)^{1/4} 
=  1/137.03608245,   			          	\label{4-16}
\vspace{0.2cm}
\end{equation}
which closely matches the CODATA value $1/137.035999139$ \cite{cod} of the 
electromagnetic fine-structure constant $\alpha$.
This agreement clearly identifies the interaction as the electromagnetic interaction. 
Therefore:
\begin{corollary} The electromagnetic interaction is a model-independent property 
of irreducible two-particle representations of the Poincar\'e group.
\end{corollary}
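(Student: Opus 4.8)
The plan is to assemble into a single model-independent statement the chain of results built up in Sections~3--5. First I would recall that, by Lemma~1 and Corollary~1, the basis states $\left|\mathbf{p},m\right>$ of an irreducible two-particle representation are necessarily momentum-entangled superpositions of product states $\left|\mathbf{p}_1,\mathbf{p}_2\right>$ sharing the same total momentum; the complement to Noether's theorem (Theorem~1) and the Interaction law (Proposition~1) then identify this entanglement as a genuine exchange of momentum between the two particles, i.e. as an interaction that is forced by the representation theory alone and needs no mediating field, gauge principle, or postulated coupling term. So the qualitative half of the claim --- that there \emph{is} an interaction --- is already in hand.

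Second I would show that this interaction has a definite, computable strength. By Corollary~2, the squared normalization factor $\omega^2 = V(\Omega)^{-1}$ of the intermediate two-particle state plays the role of a coupling constant in the transition amplitude~(\ref{3-1}), so everything reduces to computing the volume of the integration domain $\Omega$. Here I would carry out the construction of Section~5: recognise the two-particle mass shell~(\ref{4-6}) as a circle bundle over a hyperboloid, embed the relevant actions of SO(3,1) on $p$ and SO(2,1) on $q$ into SO(5,2), and pass to the Lie ball $D^5$. Splitting the Lie-ball integral into its boundary part over $Q^5$ and its radial part, and then accounting in turn for the reduction of the local SO(5) of $D^5$ to the local SO(4) appropriate to $\Omega$ (a factor $V(S^4)$), the rescaling of the radial coordinate needed to make the infinitesimal volume element isotropic and Cartesian (a factor $V(D^5)^{1/4}$), the Jacobian $2$ relating $(p,q)$ to $(p_1,p_2)$, and the coherent contribution $V(S^2)=4\pi$ from representations with different values of the Casimir $W$, I would obtain $4\pi\omega^2 = 8\pi\,V(D^5)^{1/4}/(V(Q^5)\,V(S^4))$ as in~(\ref{4-13}), and hence the numerical value~(\ref{4-16}) after inserting the standard volumes~(\ref{4-15}).

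The final step is the identification. Since the dimensionless strength of the electromagnetic interaction is by definition the fine-structure constant $\alpha$, and since the computed coupling $4\pi\omega^2 = 1/137.0360824\ldots$ reproduces the CODATA value of $\alpha$ to the precision displayed, the interaction intrinsic to the irreducible two-particle representation can only be the electromagnetic one. Model-independence is then a consequence of the fact that every ingredient entering the argument --- the quantum-mechanical axioms, Poincar\'e invariance, and the geometry of $\Omega$ --- is determined once the representation is fixed, with no appeal to gauge invariance or to the existence of quantum fields.

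I expect the main obstacle to lie in the bookkeeping of the volume factors in the second step, specifically in rigorously justifying that the passage from $D^5$ to $\Omega$ contributes \emph{exactly} $V(S^4)$ and that the radial rescaling contributes \emph{exactly} $V(D^5)^{1/4}$; both rest on the somewhat delicate claim that the natural infinitesimal volume element on $\Omega$, once the radial direction is rescaled, coincides with the isotropic Cartesian element demanded by the integral in~(\ref{2-4}). A secondary subtlety is establishing that the sum over representations with different $W$ is coherent with weight $V(S^2)$ rather than incoherent, which would require examining how the Clebsch--Gordan-type coefficients $c(\mathbf{p},m,\mathbf{p}_1,\mathbf{p}_2)$ are distributed along the circle fibre. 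Once these geometric points are secured, the remainder is routine substitution and the comparison with the CODATA value.
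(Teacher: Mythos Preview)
Your proposal is correct and follows the paper's own route exactly: the paper does not attach a separate formal proof to this corollary, but simply states it as the conclusion of the numerical match in~(\ref{4-16}) with the CODATA value of $\alpha$, after the chain Lemma~1 $\to$ Corollary~1 $\to$ Corollary~2 $\to$ the volume computation of Section~5. You have faithfully reconstructed that chain, including the model-independence claim, and your flagged obstacles (the $V(S^4)$ and $V(D^5)^{1/4}$ bookkeeping, and the coherence of the $4\pi$ factor) are genuine soft spots in the paper's own argument rather than gaps peculiar to your write-up.
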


The above calculation also gives $\alpha$ a clear mathematical meaning: it is essentially the square of 
the correct normalization factor for two-particle eigenstates of total momentum and angular momentum.

The volume term on the right side of Equation (\ref{4-13}) was accidentally found in 1971 
by the Swiss mathematician Armand Wyler (a former student of Heinz Hopf), while he was 
playing around with some symmetric domains. 
Wyler published his finding \cite{aw}, hoping to attract the interest of physicists. 
Unfortunately, Wyler was not able to put his observation into a convincing physical 
context. See also \cite{wew}.
Therefore, the physics community dismissed his formula as meaningless numerology. 
Now it has become evident that Wyler's formula directly links the signature of the 
electromagnetic interaction with the geometric footprint of the fibered two-particle 
mass shell.

\section{Electric Charge}

From the above considerations it is obvious that the numerical value of the `electric charge' is nothing 
other than the normalization factor of two-particle states. 
Therefore, charge cannot be considered a property of a single particle: it is rather an emerging property 
of two-particle states.
For massive spin-$1/2$ particles the sign of charge is determined by the Dirac equation, giving particles 
and anti-particles opposite signs.

We can further conclude that two particles in a pure product state do not interact electromagnetically, 
because such a state is by definition not momentum entangled: therefore, there is no exchange of momentum.
This may explain why neutrinos empirically don't show an electric charge.
A suitable theory of weak interaction should clarify this issue.

\section{Electromagnetic Field and Coulomb potential}

Within the perturbation approach to QED it is shown that if one of the particle masses is much larger than the other,
the exchange of virtual photons is equivalent to a $1/r$ potential, which is identified as the Coulomb potential 
\cite{bn,sss3}. 

In \cite{cdg} we find a more direct derivation, which matches our view of momentum exchange: 
it is based on treating the exchanged momenta as a perturbation to a two-particle system with the particles pinned 
to fixed positions.

In quantum mechanics, a momentum $\bm{k}$ is represented by the wave function $e^{i\bm{k}\bm{x}}$.
Based on this representation, the field $A_s$ is defined in terms of `scalar photons'
\begin{equation}
A_s(\bm{x}) = (2\pi)^{-3/2} \int \frac{d^3k}{\sqrt{2\,|\bm{k}|}}\, (a_s(\bm{k}) \, e^{i\bm{k}\bm{x}} 
+ a_s^\dagger(\bm{k}) \, e^{-i\bm{k}\bm{x}})
\end{equation}
describing the exchanged momenta in a Fock space representation. 
It can be understood as the time component $A_0$ of a 4-potential $A_\mu$, in analogy to the quantized electromagnetic 
4-potential.
Following the perturbation approach of QED, the coupling Hamiltonian is defined by
\begin{equation}
H_{coupling} = \int d^3 x \, A_\mu \, j^\mu ,
\end{equation}
where $j^\mu = (\rho, \bm{j})$ is the 4-current density.
In our case of pinned charges it is given by
\begin{equation}
j^\mu = (q_1\,\delta({\bm{x} - \bm{x}_1}) + q_2\,\delta({\bm{x} - \bm{x}_2}), \; \bm{0}),
\end{equation}
which leads to
\begin{equation}
H_{coupling} = (2\pi)^{-3/2} \int \frac{d^3k}{\sqrt{2\,|\bm{k}|}}\,
\left[a_s(\bm{k}) \, (q_1\,e^{i\bm{k}\bm{x}_1} + q_2\,e^{i\bm{k}\bm{x}_2}) + a_s^\dagger(\bm{k}) \, 
(q_1\,e^{-i\bm{k}\bm{x}_1} + q_2\,e^{-i\bm{k}\bm{x}_2})\right].
\end{equation}

We can estimate the shift $\Delta E$ by which the energy is changed by using perturbation theory 
up to the first non-zero term:
\begin{equation}
\Delta E = \left<0\right|H_{coupling}\left|0\right> 
+ \sum_{n \not= 0} \frac{|\left<n\right|H_{coupling}\left|0\right>|^2}{E_n} + ...\,.  \label{7-2}
\end{equation}
The first term of (\ref{7-2}) is zero; the second term is non-zero for states where a scalar photon is produced.
With $E_n \sim |\bm{k}|$, we thus get 
\begin{equation}
\Delta E \approx \int d^3k \, \frac{|\left<\bm{k}\right|H_{coupling}\left|0\right>|^2}{|\bm{k}|}.  \label{7-3}
\end{equation}
We now have
\begin{equation}
\left<\bm{k}\right|H_{coupling}\left|0\right> = \frac{1}{\sqrt{2\, |\bm{k}|}} \, (q_1\,e^{i\bm{k}\bm{x}_1} 
+ q_2\,e^{i\bm{k}\bm{x}_2}) 
\end{equation}
and, therefore,
\begin{equation}
|\left<\bm{k}\right|H_{coupling}\left|0\right>|^2 = \frac{1}{2\, |\bm{k}|} \, |(q_1\,e^{i\bm{k}\bm{x}_1} 
+ q_2\,e^{i\bm{k}\bm{x}_2})|^2.   \label{7-5}
\end{equation}
Substituting (\ref{7-5}) into (\ref{7-3}) gives 
\begin{equation}
\Delta E \approx \epsilon_1 + \epsilon_2 + V_{coul},
\end{equation}
where 
\begin{equation}
\epsilon =  \int d^3k\, \frac{q^2}{2\,k^2}
\end{equation}
is the Coulomb self-energy of the two charges, $q_1$ and $q_2$, and
\begin{equation}
V_{coul} =  \int d^3k\,\frac{q_1q_2\,e^{i\bm{k}(\bm{x}_1-\bm{x}_2)}}{2\,k^2} =  \frac{q_1q_2}{4\pi\,|\bm{x}_1 - \bm{x}_2|}
\end{equation}
the Coulomb interaction energy between them.

\section{Conclusions}

Noether's theorem in connection with its complement provides a new and unbiased view on 
the electromagnetic interaction; 
this view is mathematically well founded on the principles of quantum mechanics, is independent 
of any model, and is physically supported by the matching of the calculated and empirical values of the 
fine-structure constant $\alpha$. 
It simply says: In a closed multi-particle system with well-defined and conserved total 
and angular momenta, the translation invariance of the individual particles is broken; 
therefore, the corresponding multi-particle states cannot be plain product states, but must be 
momentum entangled; 
this is synonymous with a virtual exchange of momentum between the individual particle states. 

In contrast to the standard formulation of QED, the group theoretical approach uniquely determines the 
electromagnetic coupling constant, which is identified as the normalization factor for two-particle 
states of an irreducible two-particle representation of the Poincar\'e group.

Noether's theorem basically confirms the interaction mechanism of the Standard Model, which is an 
interaction by exchange of momentum. 
However, in the Standard Model, the exchange of momentum is modelled by the exchange of virtual gauge bosons; 
this approach does not take note of the special topology of the fibered two-particle mass shell: 
In a correctly defined two-particle state, the exchange of momentum is controlled by a one-dimensional 
and bounded parameter on a circle fibre.
In contrast, gauge particles come along with three independent components of momentum.
The Feynman rules prescribe integration over these three (unbounded) parameters, rather 
than---as would be correct---over a single parameter on a circle fibre.
An inspection of the integrals of the standard perturbation algorithm (cf., e.g. \cite{sss2}) clearly 
shows that it is the excessive number and ranges of integration variables that are responsible for the 
well-known divergences.

The crucial insight of the foregoing analysis is that the structure of the Poincar\'e group, via its irreducible 
representations, completely determines not only the basic properties of single particles, but also their 
(electromagnetic) interaction. 
There is no need for an interaction term or additional physical principles, such as gauge invariance, 
which the developers of the Standard Model considered an indispensable prerequisite of interaction.
Poincar\'e invariance alone provides the rules for the complete description of multi-particle configurations.
The Standard Model implements these rules only insufficiently: 
although the Feynman rules correctly ensure the conservation of the total momentum by explicitly providing 
the appropriate $\delta$ functions, they fail to provide adequate rules for the angular momentum.

Group theory provides a simple and unspectacular `big picture' of the real world, showing nothing more than 
the Hilbert space of a product representation of the Poincar\'e group---this is in full compliance with the 
quantum mechanical axiom concerning composed systems. 
Within this Hilbert space, we find entangled structures, which can be described by a virtual exchange of momentum.
These structures are a consequence of the basic structure of the Poincar\'e group, rather than of an exchange of 
virtual `force particles' (gauge bosons).
The popular picture of `virtual particles constantly popping in and out of existence' \cite{sca}, is thereby 
exposed as a `fallacy of misplaced concreteness' \cite{anw}.
The big picture presents us with the challenge to comprehend also the weak, strong, and, finally, gravitational 
interactions, as inherent structural properties of multi-particle states -- instead of merely modelling them.


 
\newpage

\end{document}